\newtheorem{theorem}{Theorem}
\newtheorem{corollary}[theorem]{Corollary}
\newtheorem{definition}[theorem]{Definition}
\newcommand{\scrod}{\quad\nopagebreak}
\newenvironment{proof}
{\bigskip\noindent\textbf{Proof~}} {\marginpar{$\Box$}\bigskip}
\begin{document}

\date{}

\title{The Complexity of Testing Monomials in Multivariate Polynomials}

\author{Zhixiang Chen and Bin Fu
 \\ \\
Department of Computer Science\\
 University of Texas-Pan American\\
 Edinburg, TX 78539, USA\\
\{chen,binfu\}@cs.panam.edu\\\\
} \maketitle

\begin{abstract}
The work in this paper is to initiate a theory of testing
monomials in multivariate polynomials. The central question is to
ask whether a polynomial represented by certain economically
compact structure has a multilinear monomial in its sum-product
expansion. The complexity aspects of this problem and its variants
are investigated with two folds of objectives. One is to
understand how this problem relates to critical problems in
complexity, and if so to what extent. The other is to exploit
possibilities of applying algebraic properties of polynomials to
the study of those problems. A series of results about
$\Pi\Sigma\Pi$ and $\Pi\Sigma$ polynomials are obtained in this
paper, laying a basis for further study along this line.
\end{abstract}

\section{Introduction}

We begin with two examples to exhibit the motivation and necessity
of the study about the monomial testing problem for multivariate
polynomials. The first is about testing a  $k$-path in any given
undirected graph $G=(V,E)$ with $|V| = n$, and the second is about
the satisfiability problem. Throughout this paper, polynomials
refer to those with multiple variables.

For any fixed integer $c\ge 1$, for each vertex $v_i \in V$,
define a polynomial $p_{k,i}$ as follows:
\begin{eqnarray}
p_{1,i}  &=&  x_i^c, \nonumber \\
p_{k+1,i} &=&  x_i^c   \left(\sum_{(v_i,v_j)\in E} p_{k,j}\right),
\ k
>1. \nonumber
\end{eqnarray}
We define a polynomial for $G$ as
\begin{eqnarray}
p(G, k)  &=&  \sum^{n}_{i=1} p_{k,i}. \nonumber
\end{eqnarray}
Obviously, $p(G,k)$ can be represented by an arithmetic circuit.
It is easy to see that the graph $G$ has a $k$-path $v_{i_1}\cdots
v_{i_k}$ iff $p(G, k)$ has a monomial of $x_{i_1}^c\cdots
x_{i_k}^c$ of degree $ck$ in its sum-product expansion. $G$ has a
Hamiltonian path iff $p(G, n)$ has the monomial $x_1^c\cdots
x_n^c$ of degree $cn$ in its sum-product expansion. One can also
see that a path with some loop can be characterized by a monomial
as well. Those observations show that testing monomials in
polynomials is closely related to solving $k$-path, Hamiltonian
path and other problems about graphs. When $c=1$, $x_{i_1}\cdots
x_{i_k}$ is multilinear. The problem of testing multilinear
monomials has recently been exploited by Koutis \cite{koutis08}
and Williams \cite{williams09} to design innovative randomized
parameterized algorithms for the $k$-path problem.

Now, consider any CNF formula $f= f_1 \wedge \cdots \wedge f_m$, a
conjunction of $m$ clauses with each clause $f_i$ being a
disjunction of some variables or negated ones. We may view
conjunction as multiplication and disjunction as addition, so $f$
looks like a {\em "polynomial"}, denoted by $p(f)$. $p(f)$ has a
much simpler $\Pi\Sigma$ representation, as will be defined in the
next section, than general arithmetic circuits. Each {\em
"monomial"} $\pi = \pi_1 \ldots \pi_m$ in the sum-product
expansion of $p(f)$ has a literal $\pi_i$ from the clause $f_i$.
Notice that  a boolean variable $x \in Z_2$ has two properties of
$x^2 = x$ and $x \bar{x} = 0$. If we could realize these
properties for $p(f)$ without unfolding it into its sum-product,
then $p(f)$ would be a {\em "real polynomial"} with two
characteristics: (1) If $f$ is satisfiable then $p(f)$ has a
multilinear monomial, and (2) if $f$ is not satisfiable then
$p(f)$ is identical to zero. These would give us two approaches
towards testing the satisfiability of $f$. The first is to test
multilinear monomials in $p(f)$, while the second is to test the
zero identity of $p(f)$. However, the task of realizing these two
properties with some algebra to help transform $f$ into a needed
polynomial $p(f)$ seems, if not impossible, not easy. Techniques
like arithmetization in Shamir \cite{shamir92} may not be suitable
in this situation. In many cases, we would like to move from $Z_2$
to some larger algebra so that we can enjoy more freedom to use
techniques that may not be available when the domain is too
constrained. The algebraic approach within $Z_2[Z^k_2]$ in Koutis
\cite{koutis08} and Williams \cite{williams09} is one example
along the above line. It was proved in Bshouty {\em et al.}
\cite{bshouty95}  that extensions of  DNF formulas over $Z^n_2$ to
$Z_N$-DNF formulas over the ring $Z^n_N$ are learnable by a
randomized algorithm with equivalence queries, when $N$ is large
enough. This is possible because a larger domain may allow more
room to utilize randomization.

There has been a long history in complexity theory with heavy
involvement of studies and applications of polynomials. Most
notably, low degree polynomial testing/representing and polynomial
identity testing have played invaluable roles in many major
breakthroughs in complexity theory. For example, low degree
polynomial testing is involved in the proof of the PCP Theorem,
the cornerstone of the theory of computational hardness of
approximation and the culmination of a long line of research on IP
and PCP (see, Arora {\em at el.} \cite{arora98} and Feige {\em et
al.} \cite{feige96}). Polynomial identity testing has been
extensively studied due to its role in various aspects of
theoretical computer science (see, for examples, Chen and Kao
\cite{chen00}, Kabanets and Impagliazzo \cite{kabanets03}) and its
applications in various fundamental results such as Shamir's
IP=PSPACE \cite{shamir92} and the AKS Primality Testing
\cite{aks04}. Low degree polynomial representing
\cite{minsky-papert68} has been sought for so as to prove
important results in circuit complexity, complexity class
separation and subexponential time learning of boolean functions
(see, for examples, Beigel \cite{beigel93}, Fu\cite{fu92} and
Klivans and Servedio \cite{klivans01}). These are just a few
examples. A survey of the related literature is certainly beyond
the scope of this paper.

The above two examples of the $k$-path testing and satisfiability
problems, the rich literature about polynomial testing and many
other observations have motivated us to develop a new theory of
testing monomials in polynomials represented by economically
compact structures. The monomial testing problem is related to,
and somehow complements with, the low degree testing and the
identity testing of polynomials. We want to investigate various
complexity aspects of the monomial testing problem and its
variants with two folds of objectives. One is to understand how
this problem relates to critical problems in complexity, and if so
to what extent. The other is to exploit possibilities of applying
algebraic properties of polynomials to the study of those critical
problems.

The paper is organized as follows. We first define $\Pi\Sigma\Pi$
and $\Pi\Sigma$ polynomials. The first is a product of clauses
such that each clause is a sum of terms and each term is a product
of variables. The second is like the first except that each term
is just one variable. These polynomials have easy depth-$3$ or
depth-$2$ circuit representations that have been extensively
studied for the polynomial identity testing problem. We prove a
series of results: The multilinear monomial testing problem for
$\Pi\Sigma\Pi$ polynomials is NP-hard, even when each clause has
at most three terms. The testing problem for $\Pi\Sigma$
polynomials is in P, and so is the testing for two-term
$\Pi\Sigma\Pi$ polynomials. However, the testing for a product of
one two-term $\Pi\Sigma\Pi$ polynomial and another $\Pi\Sigma$
polynomial is NP-hard. This type of polynomial product is, more or
less, related to the polynomial factorization problem. We also
prove that testing $c$-monomials for two-term $\Pi\Sigma\Pi$
polynomials is NP-hard for any $c> 2$, but the same testing is in
P for $\Pi\Sigma$ polynomials. Finally, two parameterized
algorithms was devised for three-term $\Pi\Sigma\Pi$ polynomials
and products of two-term $\Pi\Sigma\Pi$ and $\Pi\Sigma$
polynomials. These results have laid a basis for further study
about testing monomials.

\section{Notations and Definitions}

Let ${\cal P}\in \{Z, Z_N, Z_2\}$, $N>2$. For variables $x_1,
\dots, x_n$, let ${\cal P} [x_1,\cdots,x_n]$ denote the
communicative ring of all the $n$-variate polynomials with
coefficients from ${\cal P}$. For $1\le i_1 < \cdots <i_k \le n$,
$\pi =x_{i_1}^{j_1}\cdots x_{i_k}^{j_k}$ is called a monomial. The
degree of $\pi$, denoted by $\mbox{deg}(\pi)$, is
$\sum^k_{s=1}j_s$. $\pi$ is multilinear, if $j_1 = \cdots = j_k =
1$, i.e., $\pi$ is linear in all its variables $x_{i_1}, \dots,
x_{i_k}$. For any given integer $c\ge 1$, $\pi$ is called a
$c$-monomial, if $1\le j_1, \dots, j_k < c$.

An arithmetic circuit, or circuit for short, is a direct acyclic
graph with $+$ gates of unbounded fan-ins, $\times$ gates of two
fan-ins, and all terminals corresponding to variables. The size,
denoted by $s(n)$, of a circuit with $n$ variables is the number
of gates in it. A circuit is called a formula, if the fan-out of
every gate is at most one, i.e., the underlying direct acyclic
graph is a tree.

By definition, any polynomial $p(x_1,\dots,x_n)$ can be expressed
as a sum of a list of monomials, called the sum-product expansion.
The degree of the polynomial is the largest degree of its
monomials in the expansion. With this expression, it is trivial to
see whether $p(x_1,\dots,x_n)$ has a multilinear monomial, or a
monomial with any given pattern. Unfortunately, this expression is
essentially problematic and infeasible to realize, because a
polynomial may often have exponentially many monomials in its
expansion.

In general, a polynomial $p(x_1,\dots,x_n)$ can be represented by
a circuit or some even simpler structure as defined in the
following. This type of representation is simple and compact and
may have a substantially smaller size, say, polynomially in $n$,
in comparison with the number of all monomials in the sum-product
expansion. The challenge is how to test whether $p(x_1,\dots,x_n)$
has a multilinear monomial or some needed monomial, efficiently
without unfolding it into its sum-product expansion?


\begin{definition}\scrod
Let $p(x_1,\dots,x_n)\in {\cal P}[x_1,\dots,x_n]$ be any given
polynomial. Let $m, s, t\ge 1$ be integers.
\begin{itemize}
\item $p(x_1,\dots,x_n)$ is said to be a $\Pi_m\Sigma_s\Pi_t$
polynomial, if $p(x_1,\dots,x_n)=\prod_{i=1}^t F_i$, $F_i =
\sum_{j=1}^{r_i} X_{ij}$ and $1\le r_i \le s$, and
$\mbox{deg}(X_{ij})\le t$. We call each $F_i$ a clause. Note that
$X_{ij}$ is not a monomial in the sum-product expansion of
$p(x_1,\dots,x_n)$ unless $m=1$. To differentiate this subtlety,
we call $X_{ij}$ a term.

\item In particular, we say $p(x_1,\dots,x_n)$ is a
$\Pi_{m}\Sigma_s$ polynomial, if it is a
 $\Pi_m\Sigma_s\Pi_1$ polynomial. Here, each clause is a linear addition
 of single variables. In other word, each term has degree $1$.

\item When no confusing arises from the context, we use
$\Pi\Sigma\Pi$ and $\Pi\Sigma$ to stand for $\Pi_m\Sigma_s\Pi_t$
and $\Pi_m\Sigma_s$ respectively.

Similarly, we use $\Pi\Sigma_s\Pi$ and $\Pi\Sigma_s$ to stand for
$\Pi_m\Sigma_s\Pi_t$ and $\Pi_m\Sigma_s$ respectively, emphasizing
that every clause in a polynomial has at most $s$ terms or is a
linear addition of at most $s$ single variables.

\item For any given integer $k\ge 1$, $p(x_1,\dots,x_n)$ is called
a $k$-$\Pi\Sigma\Pi$ polynomial, if each of its terms has
$k$ distinct variables.

\item $p(x_1,\dots,x_n)$ is called a $\Pi\Sigma\Pi \times
\Pi\Sigma$ polynomial, if $p(x_1,\dots,x_n) = p_1 p_2$ such that
$p_1$ is a $\Pi\Sigma\Pi$ polynomial and $p_2$ is a $\Pi\Sigma$
polynomial. Similarly, $p(x_1,\dots,x_n)$ is called a
$k$-$\Pi\Sigma\Pi \times \Pi\Sigma$ polynomial, if
$p(x_1,\dots,x_n) = p_1 p_2$ such that $p_1$ is a
$k$-$\Pi\Sigma\Pi$ polynomial and $p_2$ is a $\Pi\Sigma$
polynomial.
\end{itemize}
\end{definition}

It is easy to see that a $\Pi_m\Sigma_s\Pi_t$ or $\Pi_m\Sigma_s$
polynomial may has as many as $s^m$ monomials in its sum-product
expansion.

On the surface, a $\Pi_m\Sigma_s\Pi_t$ polynomial {\em
"resembles"} a SAT formula, especially when $t=1$. Likewise, a
$\Pi_m\Sigma_3\Pi_t$ ($\Pi_m\Sigma_2\Pi_t$) polynomial {\em
"resembles"} a 3SAT (2SAT) formula, especially when $t=1$.
However, negated variables are not involved in a polynomials.
Furthermore, as pointed out in the previous section, it is not
easy, if not impossible, to have some easy algebra to deal with
the properties of $x^2 = x$ and $x \cdot \bar{x} =  0$ in a field,
especially when the field is larger than $Z_2$. Also, as pointed
out before, the arithmetization technique in Shamir
\cite{shamir92} is not applicable to this case.

\section{$\Pi\Sigma\Pi$ Polynomials}

Given any $\Pi_m\Sigma_s\Pi_t$ polynomial $p(x_1,\ldots,x_n) = p_1
\cdots p_m$, one can nondeterministically choose a term $\pi_i$
from the clause $p_i$ and then check whether $\pi_1 \cdots \pi_m$
is a multilinear monomial. So the problem of testing multilinear
monomials in a $\Pi\Sigma\Pi$ polynomial is in NP. In the
following we show that this problem is also NP-hard.

\begin{theorem}\label{mlm-thm}
It is NP-hard to test whether a $2$-$\Pi_m\Sigma_3\Pi_2$
polynomial has a multilinear monomial in its sum-product
expansion.
\end{theorem}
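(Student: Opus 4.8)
The plan is to reduce from a known NP-complete problem — the natural candidate is $3$-dimensional matching or, more simply, $3$-SAT or exact cover — and to encode a combinatorial object as a $2$-$\Pi_m\Sigma_3\Pi_2$ polynomial whose multilinear monomials correspond exactly to the solutions. Because each clause may have up to three terms and each term is a product of exactly two distinct variables, the gadget for a clause is a sum of (at most) three quadratic monomials $x_ax_b + x_cx_d + x_ex_f$. Selecting one term per clause picks an edge (a pair of variables), and the selected monomial is multilinear precisely when all chosen edges are vertex-disjoint. Thus a $2$-$\Pi_m\Sigma_3\Pi_2$ polynomial having a multilinear monomial is the same as: can we pick one edge from each of $m$ prescribed size-$\le 3$ families of edges so that the chosen edges form a matching?

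The cleanest route I would take is a reduction from $3$-SAT. Given a $3$-CNF formula $\phi$ with variables $y_1,\dots,y_\ell$ and clauses $C_1,\dots,C_m$, introduce, for each literal occurrence, a fresh polynomial variable, and for each boolean variable $y_j$ a pair of ``truth-value'' variables $t_j, f_j$. First build, for each $j$, consistency clauses that force every occurrence of $y_j$ to agree on a truth value; the trick is that a two-term clause $x_ux_v$-type gadget can act as an equality/exclusion constraint, so that picking ``$y_j$ true'' in one place forbids ``$y_j$ false'' elsewhere via a shared variable. Then, for each $3$-SAT clause $C_i = (z_{i,1}\vee z_{i,2}\vee z_{i,3})$, create one three-term clause whose $k$-th term is the degree-$2$ monomial representing ``literal $z_{i,k}$ is the one satisfying $C_i$''. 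A global multilinear monomial exists iff there is a consistent assignment of truth values together with a choice of a satisfied literal in each clause, i.e. iff $\phi$ is satisfiable. One must be careful that the two variables inside each term are genuinely distinct (so the term itself is multilinear) and that the reduction is polynomial in size — both are easy to arrange by using enough auxiliary variables.

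The key steps, in order, are: (1) fix the source problem and describe the variable set (occurrence variables plus a constant number of bookkeeping variables per boolean variable); (2) write down the consistency clauses and prove the shared-variable mechanism really enforces ``at most one truth value per boolean variable''; (3) write down the clause-satisfaction clauses, each a sum of exactly (at most) three quadratic terms; (4) prove the forward direction — a satisfying assignment yields disjoint term choices, hence a multilinear monomial; (5) prove the reverse direction — any multilinear monomial, by disjointness of its constituent pairs, reads off a consistent assignment satisfying every clause; (6) check that the construction has size polynomial in $|\phi|$ and that membership in NP (already argued in the text) gives NP-completeness as a bonus.

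The main obstacle I anticipate is the reverse direction together with the design of the consistency gadget: one must ensure that \emph{every} way of selecting one term per clause that yields a multilinear monomial forces a globally consistent truth assignment — in particular that a boolean variable cannot be simultaneously ``used true'' in one clause-gadget and ``used false'' in another. This is exactly where the $\Pi_2$ structure (two variables per term) must be exploited: the two slots of a term let us pair an occurrence variable with a ``lock'' variable shared across the occurrences of the same literal polarity, so that multilinearity (no repeated variable) propagates the constraint. Getting this bookkeeping tight, while keeping every clause to at most three terms and every term to exactly two distinct variables, is the delicate part; the rest is routine.
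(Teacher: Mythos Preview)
Your outline points in the right direction---a reduction from 3-SAT in which multilinearity enforces literal consistency through shared variables---but it leaves the central gadget unspecified, and the paper's proof sidesteps precisely the difficulty you flag as ``the delicate part.'' Rather than introducing truth-value variables $t_j,f_j$ and a layer of separate consistency clauses, the paper first passes to the bounded-occurrence variant of 3-SAT in which every boolean variable appears at most three times (and, after an easy preprocessing, at most twice positively and once negatively). The encoding is then direct and clause-preserving: the two positive occurrences of $x_i$ become fresh singletons $y_{i1}$ and $y_{i2}$, and the single occurrence of $\bar{x}_i$ becomes the product $y_{i1}y_{i2}$ (with analogous, simpler rules when $x_i$ appears fewer times). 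That one idea does all the work: selecting $\bar{x}_i$ in one clause and $x_i$ in another forces a repeated $y_{ij}$ and kills multilinearity, while two positive selections use disjoint singletons and never conflict. No auxiliary consistency clauses are needed, the output polynomial has exactly the same number of clauses as the input formula, and every term has at most two distinct variables.

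Your plan, by contrast, proposes an extra family of consistency clauses whose design you do not actually carry out; you correctly anticipate that making those clauses simultaneously (i) at most three terms, (ii) degree-two multilinear terms, and (iii) globally sound in the reverse direction is where the real work lies. It is plausible such a gadget exists, but as written the proposal is a sketch rather than a proof. The paper's bounded-occurrence trick is both simpler and complete, and it is worth internalizing: restricting to a still-NP-hard special case of the source problem can eliminate the need for consistency machinery altogether.
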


Note that every clause in such a $2$-$\Pi_m\Sigma_3\Pi_2$
polynomial has at most three terms such that each term has at most
two distinct variables.

\begin{proof}
We reduce 3SAT to the given problem. Let $f=f_1 \wedge \cdots
\wedge f_m$  be a 3SAT formula. Without loss of generality, we
assume that every variable $x_i$ in $f$ appears at most three
times, and if $x_i$ appears three times, then $x_i$ itself occurs
twice and $\bar{x}_i$ once. (It is easy to see that a simple
preprocessing procedure can transform any 3SAT formula to satisfy
these properties.)

Let $x_i$ be any given variable in  $f$, we introduce new
variables to replace it. If $x_i$ appears only once then we
replace the appearance of $x_i$ (or $\bar{x}_i$) by a new variable
$y_{i1}$. When $x_i$ appears twice, then we do the following: If
$x_i$ (or its negation $\bar{x}_i$)  occurs twice, then replace
the first occurrence by a new variable $y_{i1}$ and the second by
$y_{i2}$. If both $x_i$ and $\bar{x}_i$ occur, then replace both
occurrences by $y_{i1}$. When $x_i$ occurs three times with $x_i$
appearing twice and $\bar{x}_i$ once, then replace the first
$x_{i}$ by $y_{i1}$ and the second by $y_{i2}$, and replace
$\bar{x}_i$ by $y_{i1}y_{i2}$. This procedure of replacing all
variables in $f$, negated or not, with new variables can be
carried out easily in quadratic time.

Let $p = p_1\cdots p_m$ be polynomial resulting from the above
replacement process. Here, $p_i$ corresponds to  $f_i$ with
boolean literals being replaced. Clearly, $p$ is a
$2$-$\Pi_m\Sigma_3\Pi_2$ polynomial.

We now consider the sum-product expansion of $f = f_1 \cdots f_m$.
It is easy to see that $f$ is satisfiable iff its sum-product
expansion has a product
$$
\psi = \tilde{x}_{i_1} \cdots \tilde{x}_{i_m},
$$
where the literal $\tilde{x}_{i_j}$ is from the clause $f_j$ and
is either $x_{i_j}$ or $\bar{x}_{i_j}$, $1\le j \le m$.
Furthermore,  the negation of $\tilde{x}_{i_j}$ must not occur in
$\pi$.

Let $t(\tilde{x}_{i_j})$ denote the replacement of
$\tilde{x}_{i_j}$ by new variables $y_{i_j1}$ and/or $y_{i_j2}$ as
described above to transform $f$ to $p$. Then,
$t(\tilde{x}_{i_j})$ is a term in the clause $p_j$. Hence,
$$
t(\psi) = t(\tilde{x}_{i_j}) \cdots t(\tilde{x}_{i_m})
$$
is a monomial in the sum-product expansion of $p$. Moreover,
$t(\psi)$ is multilinear, because a variable and its negation
cannot appear in $\pi$ at the same time.

On the other hand, assume that
$$
\pi = \pi_1 \cdots \pi_m
$$
is a multilinear monomial in $p$ with the term $\pi_{i_j}$ in the
clause $p_j$. Let $t^{-1}(\cdot)$ denote the reversal replacement
of $t(\cdot)$. Then, by the procedure of the replacement above,
$t^{-1}(\pi_{i_j})$ is a variable or the negation of a variable in
$f_j$. Thus,
$$
t^{-1}(\pi) = t^{-1}(\pi_1) \cdots t^{-1}(\pi_m)
$$
is a product in the sum-product expansion of $f$. Since $\pi$ is
multilinear,  a variable and its negation cannot appear in
$t^{-1}(\pi)$ at the same time. This implies  that $f$ is
satisfiable by an assignment of setting all the literals in
$t^{-1}(\pi)$ true.
\end{proof}

We give an example to illustrate the variable replacement
procedure given in the above proof.   Given a 3SAT formula
$$
f = (x_1 \vee \bar{x}_2 \vee x_3) \wedge (\bar{x}_1 \vee x_2 \vee
x_4) \wedge  (x_1 \vee x_2 \vee \bar{x}_3) \wedge(x_4 \vee x_5),
$$
the polynomial for $f$ after variable replacements is
$$
p(f) = (y_{11} + y_{21}y_{22} + y_{31})(y_{11}y_{12} + y_{21} +
y_{41})(y_{12} + y_{22} + y_{31}) (y_{42} + y_{51}).
$$
The truth assignment satisfying $f$ as determined by the product
$x_3 \cdot \bar{x}_1 \cdot x_2 \cdot x_4$ is one to one
correspondent to the multilinear monomial $y_{31} \cdot
y_{11}y_{12} \cdot y_{22} \cdot y_{42}$ in $p(f)$.

Two corollaries follow immediately from this theorem.

\begin{corollary}
For any $s\ge 3$, it is NP-hard to test whether a
$\Pi_m\Sigma_s\Pi_t$ polynomial has multilinear monomials in its
sum-product expansion.
\end{corollary}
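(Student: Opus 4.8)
The plan is to derive this directly from Theorem~\ref{mlm-thm} by a containment of instance classes, so that the reduction already constructed does all the work. Fix any $s\ge 3$ and any $t\ge 2$. A $2$-$\Pi_m\Sigma_3\Pi_2$ polynomial $p=p_1\cdots p_m$ is, by definition, already a $\Pi_m\Sigma_s\Pi_t$ polynomial: each clause $p_i$ has at most $3\le s$ terms, and each term is a product of at most two distinct variables and hence has degree at most $2\le t$, so the defining inequalities $r_i\le s$ and $\mbox{deg}(X_{ij})\le t$ are satisfied. Thus the class of $2$-$\Pi_m\Sigma_3\Pi_2$ polynomials is a subclass of the $\Pi_m\Sigma_s\Pi_t$ polynomials.

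Consequently, the polynomial-time reduction from 3SAT given in the proof of Theorem~\ref{mlm-thm} serves, with no modification, as a polynomial-time reduction from 3SAT to the multilinear monomial testing problem for $\Pi_m\Sigma_s\Pi_t$ polynomials: on input a 3SAT formula $f$, one outputs the polynomial $p(f)$, which by the above lies in the target class, and by Theorem~\ref{mlm-thm} has a multilinear monomial in its sum-product expansion if and only if $f$ is satisfiable. Hence testing multilinear monomials for $\Pi_m\Sigma_s\Pi_t$ polynomials is NP-hard for every $s\ge 3$ (and every $t\ge 2$; note that the degenerate case $t=1$, i.e.\ $\Pi_m\Sigma_s$ polynomials, is not covered and is in fact tractable, as shown later).

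The only thing to verify is the class containment, which amounts to reading off from the definition that a bound of $3$ on the number of terms per clause and a bound of $2$ on the degree of each term meet the requirements of a $\Pi_m\Sigma_s\Pi_t$ polynomial whenever $s\ge 3$ and $t\ge 2$. This is routine, and I expect no genuine obstacle: the content of the corollary is entirely inherited from Theorem~\ref{mlm-thm}.
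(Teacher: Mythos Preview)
Your proposal is correct and matches the paper's approach: the paper offers no separate proof for this corollary, stating only that it ``follows immediately'' from Theorem~\ref{mlm-thm}, which is exactly the containment-of-instance-classes argument you spell out. Your added remark that the case $t=1$ is excluded (and indeed tractable by Theorem~\ref{linear-thm}) is a useful clarification that the paper leaves implicit.
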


\begin{corollary}
It is NP-hard to test whether a polynomial has multilinear
monomials in its sum-product expansion, when the polynomial is
represented by a general arithmetic circuit.
\end{corollary}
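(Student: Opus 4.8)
The plan is to observe that this corollary is an immediate consequence of Theorem~\ref{mlm-thm} together with the fact that the restricted polynomials appearing there are efficiently expressible as arithmetic circuits. So the reduction is essentially a syntactic one: from a hard instance of the $2$-$\Pi_m\Sigma_3\Pi_2$ multilinear monomial testing problem I would produce, in polynomial time, an arithmetic circuit computing the \emph{same} polynomial, and then invoke Theorem~\ref{mlm-thm}.

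Concretely, first I would take an arbitrary $2$-$\Pi_m\Sigma_3\Pi_2$ polynomial $p = p_1\cdots p_m$ with $p_i = \sum_{j=1}^{r_i} X_{ij}$, $r_i\le 3$, and $\deg(X_{ij})\le 2$, as guaranteed to be a hard instance by Theorem~\ref{mlm-thm}. Next I would build a circuit bottom-up: each term $X_{ij}$ is either a single variable (a terminal) or a product of two variables, realized by one $\times$ gate of fan-in two; each clause $p_i$ is realized by a single $+$ gate of fan-in $r_i\le 3$ over the gates for its terms; and finally the product $p_1\cdots p_m$ is realized by a balanced binary tree of $m-1$ fan-in-two $\times$ gates over the clause gates. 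The resulting circuit has $O(m)$ gates and is clearly computable in polynomial time from $p$.

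Then I would note that this circuit computes exactly the polynomial $p$, so its sum-product expansion is identical to that of $p$; in particular it contains a multilinear monomial if and only if $p$ does. Since deciding the latter is NP-hard by Theorem~\ref{mlm-thm}, deciding whether a polynomial given by a general arithmetic circuit has a multilinear monomial is NP-hard as well.

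I do not expect any real obstacle here, as the argument is purely structural. The only point requiring a word of care is the fan-in restriction on $\times$ gates in the definition of an arithmetic circuit, which forces the $m$-fold clause product to be unfolded into a binary tree; this only adds linearly many gates and does not change the computed polynomial, so it does not affect the reduction. (One could alternatively phrase the whole corollary as following from the preceding corollary in the same way, but reducing directly from Theorem~\ref{mlm-thm} keeps the circuit size smallest.)
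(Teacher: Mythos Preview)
Your proposal is correct and matches the paper's approach: the paper simply states that the corollary follows immediately from Theorem~\ref{mlm-thm}, and you have spelled out the obvious observation that a $\Pi_m\Sigma_3\Pi_t$ polynomial is trivially a (small) arithmetic circuit. The extra care you take with the fan-in-two $\times$ gates is a nice touch but not something the paper bothers to mention.
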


The NP-hardness in the above corollary was obtained by Koutis
\cite{koutis08}.

\section{$\Pi\Sigma$ Polynomials}

Note that every clause in a $\Pi\Sigma$ polynomial $p$ is a linear
addition of single variables. $p$ looks very much like a SAT
formula. But this kind of structural "resemblance" is very
superficial, as we will show in the following that the multilinear
monomial testing problem for $p$ is in P. This shows that terms
with single variables do not have the same expression power as
boolean variables and their negations together can achieve. As
exhibited in the proof of Theorem \ref{mlm-thm}, terms with two
variables are equally powerful as boolean variables together with
their negations. Hence, it is interesting to see that a complexity
boundary exists between polynomials with terms of degree $1$ and
those with terms of degree $2$.

\begin{theorem}\label{linear-thm}
There is a $O(ms\sqrt{m+n})$ time algorithm to test if a $\Pi_m\Sigma_s$ polynomial  has a multilinear monomial
in its sum-product expansion.
\end{theorem}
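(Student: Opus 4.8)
The plan is to observe that, because every term of a $\Pi_m\Sigma_s$ polynomial $p=\prod_{i=1}^m F_i$ is a single variable, a monomial in the sum-product expansion is obtained precisely by selecting one variable from each clause $F_i$, and such a monomial is multilinear if and only if the selected variables are pairwise distinct. Hence $p$ has a multilinear monomial in its expansion if and only if the clauses $F_1,\dots,F_m$ admit a system of distinct representatives, equivalently if and only if the bipartite graph $G_p$ whose left vertices are the clauses, whose right vertices are the variables $x_1,\dots,x_n$, and which has an edge $\{F_i,x_k\}$ exactly when $x_k$ occurs in $F_i$, contains a matching that saturates all $m$ clause-vertices. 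I will also note that over $Z$ the coefficient of any such monomial in the fully collected expansion is a positive integer (it counts the producing selections), so no cancellation can hide a multilinear monomial; over $Z_N$ and $Z_2$ one takes the combinatorial reading of ``has a monomial in the sum-product expansion'', consistent with the examples in Section~1.

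First I would make this correspondence precise in both directions. For the forward direction, a multilinear monomial $x_{k_1}\cdots x_{k_m}$ with the factor $x_{k_i}$ contributed by $F_i$ has $k_1,\dots,k_m$ pairwise distinct, so the edges $\{F_i,x_{k_i}\}$ form a matching of size $m$ in $G_p$. For the reverse direction, any matching of size $m$ saturates all $m$ left vertices, yielding an assignment $F_i\mapsto x_{k_i}$ with the $k_i$ distinct and $x_{k_i}$ a term of $F_i$; then $\prod_i x_{k_i}$ is a multilinear monomial occurring in the sum-product expansion of $p$.

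Next I would turn this into the claimed algorithm. Building $G_p$ from the input costs $O(ms)$ time, since $p$ has $m$ clauses of at most $s$ terms each, so $G_p$ has at most $m+n$ vertices and at most $ms$ edges. I would then compute a maximum matching of $G_p$ with the Hopcroft--Karp algorithm, which runs in $O\!\left(E\sqrt{V}\right)=O\!\left(ms\sqrt{m+n}\right)$ time, and answer ``yes'' exactly when the maximum matching has size $m$. Correctness is immediate from the characterization above, and the running time is as stated; one could alternatively phrase the criterion through Hall's condition, but Hopcroft--Karp delivers the stated bound directly.

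The step I expect to require the most care is not the algorithm but pinning down the equivalence cleanly: one must handle that a variable may occur in several clauses (so selecting it from two of them produces a non-multilinear $x_k^2$ and must be excluded), that repeated occurrences of the same variable inside one clause collapse to a single edge of $G_p$, and --- if one insists on the collected-terms definition of the expansion over $Z_N$ --- that the number of producing selections is not a multiple of $N$. Modulo these bookkeeping points, the reduction to bipartite matching and the complexity bound follow directly.
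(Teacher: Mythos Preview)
Your proposal is correct and follows essentially the same approach as the paper: reduce to finding a size-$m$ matching in the bipartite graph with clause-vertices on one side and variable-vertices on the other, then invoke the Hopcroft--Karp $O(E\sqrt{V})$ bound to obtain $O(ms\sqrt{m+n})$. The paper does not discuss the cancellation issue over $Z_N$ or $Z_2$ that you raise---it implicitly adopts the combinatorial reading of ``has a monomial in the sum-product expansion''---so your treatment is, if anything, slightly more careful on that point.
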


\begin{proof}
Let $f(x_1,\ldots,x_n)=f_1 \ldots f_m$ be any given
$\Pi_m\Sigma_s$ polynomial. Without loss of generality, we assume
that  each clause has exactly $s$ many terms, i.e., $f_i =
\sum^{s}_{j=1} x_{ij}$, $1\le i\le s$. We shall reduce the problem
of testing multilinear monomials in $f(x_1,\ldots,x_n)$ to the
problem of finding a maximum matching in some bipartite graph.

We construct a bipartite graph $G=(V_1 \cup V_2, E)$ as follows.
$V_1 = \{v_1,\ldots,v_m\}$ so that each $v_i$ represents the
clause $f_i$. $V_2 = \{x_1,\ldots, x_n\}$. For each clause $f_i$,
if it contains a variable $x_j$ then we add an edge $(v_i, x_j)$
into $E$.

Suppose that $f(x_1,\ldots,x_n)$ has a multilinear monomial
$$
\pi = x_{i_1} \cdots x_{i_m}
$$
with $x_{i_j}$ in $f_j$, $1\le j\le m$. Then, all the variables in
$\pi$ are distinct. Thus, we have a maximum matching of size $m$
$$
(v_1, x_{i_1}), \ldots, (v_m, x_{i_m}).
$$

Now, assume that we have a maximum matching of size $m$
$$
(v_1, x'_{i_1}), \ldots, (v_1, x'_{i_m}).
$$
Then, all the variables in the matching are distinct. Moreover, by
the construction of the graph $G$, $x'_{i_j}$ are in the clause
$f_j$, $1\le j\le m$. Hence,
$$
\pi' = x'_{i_1} \cdots x'_{i_m}
$$
is a multilinear monomial in $f(x_1,\ldots,x_n)$

It is well-known that finding a maximum matching in a bipartite
graph can be done in $O(|E|\sqrt{|V|})$ time
\cite{aspvall-plass-tarjan79}. So the above reduction shows that
we can test whether $f(x_1,\ldots,x_n)$ has a multilinear monomial
in $O(ms\sqrt{m+n})$, since the graph $G$ has $m+n$ vertices and
at most $ms$ edges.
\end{proof}

In the following, we give an extension of Theorem
\ref{linear-thm}.

\begin{theorem}\label{linear-thm2}
There is a $O(tc^k ms\sqrt{m+n})$ time algorithm to test whether any given
$\Pi_k\Sigma_c\Pi_t\times \Pi_m\Sigma_s$ polynomial has a multilinear monomial in its
sum-product expansion.
\end{theorem}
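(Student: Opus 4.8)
The plan is to combine the bipartite-matching idea from the proof of Theorem~\ref{linear-thm} with an exhaustive enumeration over the small, low-degree factor $p_1$. Write the input polynomial as $p = p_1 p_2$ where $p_1$ is a $\Pi_k\Sigma_c\Pi_t$ polynomial with $k$ clauses, each having at most $c$ terms of degree at most $t$, and $p_2 = f_1 \cdots f_m$ is a $\Pi_m\Sigma_s$ polynomial. A monomial in the sum-product expansion of $p$ is obtained by choosing one term $\tau_i$ from each clause of $p_1$ (at most $c^k$ choices in total) and one variable $x_{i_j}$ from each clause $f_j$ of $p_2$, and multiplying them all together. Such a product is multilinear iff (i) the chosen terms $\tau_1,\dots,\tau_k$ of $p_1$ are pairwise variable-disjoint and each is itself multilinear, (ii) the chosen variables from $p_2$ are pairwise distinct, and (iii) no variable chosen from $p_2$ occurs in any $\tau_i$.

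First I would enumerate all $\prod_{i=1}^{k} r_i \le c^k$ ways of selecting one term from each of the $k$ clauses of $p_1$. For each such selection, in $O(kt)$ time check condition (i): that the $\tau_i$ are multilinear and mutually disjoint; discard the selection if it fails. Let $S \subseteq \{x_1,\dots,x_n\}$ be the (at most $kt$) variables appearing in the surviving selection $\tau_1 \cdots \tau_k$. Now I would build exactly the bipartite graph of Theorem~\ref{linear-thm} for $p_2$, namely $V_1 = \{v_1,\dots,v_m\}$ for the clauses $f_1,\dots,f_m$ and $V_2 = \{x_1,\dots,x_n\}$ with an edge $(v_j, x_\ell)$ whenever $x_\ell$ occurs in $f_j$, but then delete from $V_2$ every vertex in $S$ (equivalently, forbid those variables). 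As in Theorem~\ref{linear-thm}, a matching saturating $V_1$ in this reduced graph corresponds exactly to a choice of distinct variables, one per clause $f_j$, none of them in $S$; multiplying these by $\tau_1\cdots\tau_k$ yields a multilinear monomial of $p$, and conversely every multilinear monomial of $p$ arises this way for some term-selection of $p_1$. So $p$ has a multilinear monomial iff for at least one of the $\le c^k$ selections the reduced bipartite graph has a matching of size $m$.

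For the running time: each maximum-matching computation on a graph with $m+n$ vertices and at most $ms$ edges costs $O(ms\sqrt{m+n})$ by \cite{aspvall-plass-tarjan79}, and the extra bookkeeping (reading off $S$, deleting its vertices, verifying condition (i)) costs $O(kt + ms)$ per selection, which is absorbed. Running over all $c^k$ selections gives $O(c^k m s \sqrt{m+n})$; the factor $t$ in the claimed bound $O(tc^k ms\sqrt{m+n})$ comes from the cost of parsing the terms of $p_1$ (each of degree up to $t$) when forming and testing them. This proves the theorem.

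I expect no deep obstacle here; the only things requiring a little care are the correctness argument for the reduced graph — one must argue both directions of the iff, exactly mirroring Theorem~\ref{linear-thm} but now with the forbidden set $S$ — and making sure the $c^k$ enumeration is the dominant combinatorial cost rather than something larger. A subtle point worth stating explicitly is that, because $p_1$ and $p_2$ may share variables, a variable occurring in some chosen $\tau_i$ must be excluded from the $p_2$-side of the matching; handling this by deleting $S$ from $V_2$ is what makes the reduction faithful.
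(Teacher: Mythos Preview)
Your proposal is correct and follows essentially the same approach as the paper: enumerate the at most $c^k$ term-selections from $p_1$, for each one delete (``purge'') its variables from the clauses of $p_2$, and invoke the bipartite-matching test of Theorem~\ref{linear-thm} on the reduced $\Pi_m\Sigma_s$ polynomial. Your write-up is in fact slightly more careful than the paper's in that you explicitly verify condition~(i), i.e.\ that the chosen product $\tau_1\cdots\tau_k$ is itself multilinear, before proceeding to the matching step.
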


\begin{proof}
Let $p=p_1p_2$ be any given $\Pi_k\Sigma_c\Pi_t\times
\Pi_m\Sigma_s$ polynomial such that $p_1 = f_1\cdots f_k$ is a
$\Pi_k\Sigma_c\Pi_t$ polynomial and $p_2 = g_1\cdots g_m$ is a
$\Pi_m\Sigma_s$ polynomial. Note that every clause $f_i$ in $p_1$
has at most c terms with degree at most $t$. So, $p_1$ has at most
$c^k$ products in its sum-product expansion. Hence, in $O(tc^k)$
time, we can list all the products in that expansion, and let
$\cal{C}$ denote the set of all those products.

It is obvious that $p$ has a multilinear monomial, iff there is one product $\psi \in \cal{C}$ such that
the polynomial $\psi p_2$ has a multilinear monomial.

Now, for any product $\psi \in \cal{C}$,  we consider how to test whether the polynomial
$$
p(\psi) = \psi \cdot p_2 = \psi \cdot g_1 \cdots g_m
$$
have a multilinear polynomial. Let
$$
\pi = \psi \cdot \pi_1 \cdots \pi_m
$$
be an arbitrary product in the sum-product expand of $p(\psi)$
with the term $\pi_i$ in $g_i$, $1\le i\le m$. Since $\psi$ is
fixed, in order to make $\pi$ to be multilinear, each $\pi_i$ must
not have a variable in $\psi$. This observation helps us devise a
one-pass {\em "purging"} process to eliminate all the variables in
every clause  of $g_i$ that cannot be included in a multilinear
monomial in $p(\psi)$. The purging works as follows: For each
clause $g_i$, eliminate all its variables that also appear in
$\psi$. Let $g_i'$ be the resulting clause of $g_i$, and $p'_2 =
\cdot g_1' \cdots g_m'$ be the resulting polynomial of $p_2$. If
any $g_i'$ is empty, then there is no multilinear monomials in
$\psi \cdot p'_2$, hence no multilinear monomials in $p(\psi)$.
Otherwise, by Theorem \ref{linear-thm}, we can decide whether
$p_2'$ has a multilinear monomial, hence whether $p(\psi)$ has a
multilinear monomial,  in $O(ms\sqrt{m+n})$ time.

Putting all the steps together, we can test whether $p$ has a multilinear monomial in
$O(tc^k ms\sqrt{m+n})$ time.
\end{proof}

\section{$\Pi\Sigma_2\Pi$ polynomials}

In Section 3, we has proved that the multilinear monomial testing
problem for any $\Pi\Sigma_s\Pi$ polynomials with at most $s\ge 3$
terms in each clause is NP-hard. In this section, we shall show
that another complexity boundary exists between $\Pi\Sigma_3\Pi$
polynomials and $\Pi\Sigma_2\Pi$ polynomials. As noted before, a
$\Pi\Sigma_2\Pi$ polynomial may look like a 2SAT formula, but they
are essentially different from each other. For example, unlike
2SAT formulas, no implication can be derived for two terms in a
clause. Thus, the classical algorithm based on implication graphs
for 2SAT formulas by Aspvall, Plass and Tarjan
\cite{aspvall-plass-tarjan79} does not apply to $\Pi\Sigma_2\Pi$
polynomials. The implication graphs can also help prove that 2SAT
is NL-complete \cite{papadimitrious94}. But we do not know whether
the monomial testing problem for $\Pi\Sigma_2\Pi$ polynomials is
NL-complete or not. We feel that it may be not. There is another
algorithm for solving 2SAT in quadratic time via repeatedly {\em
"purging"} contradicting literals. The algorithm devised in the
following more or less follows a similar approach of that
quadratic time algorithm.

\begin{theorem}\label{2-term-thm}
There is a quadratic time algorithm to test whether any given
$\Pi_m\Sigma_2\Pi_t$ polynomial has a multilinear monomial in its
sum-product expansion.
\end{theorem}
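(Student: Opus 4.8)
The plan is to design a polynomial-time algorithm that maintains, for each clause, the set of terms that could still participate in a multilinear monomial, and repeatedly purges terms that provably cannot. Given a $\Pi_m\Sigma_2\Pi_t$ polynomial $p = p_1 \cdots p_m$ with each clause $p_i = X_{i1} + X_{i2}$ (allowing the degenerate case of a single term), the key structural fact I would exploit is that with only two terms per clause, a forced choice propagates: if the variable set of a term $X_{ij}$ shares a variable with every surviving term of some other clause $p_\ell$, then $X_{ij}$ cannot be used, so it must be purged; if this empties a clause down to one term, that term becomes forced, which may in turn force or purge terms elsewhere. This is exactly the "purging contradicting literals" dynamic of the quadratic-time 2SAT algorithm alluded to in the text, adapted to the fact that here a conflict is "the two variable-sets intersect" rather than "a literal and its negation."

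The steps I would carry out, in order: (1) Initialize, for each clause, the list of its (one or two) terms together with the variable set of each term. (2) Repeat the following purge until no change: for each pair of a term $X_{ij}$ and a clause $p_\ell$ with $\ell \neq i$, if $X_{ij}$ conflicts (shares a variable) with \emph{every} surviving term of $p_\ell$, then delete $X_{ij}$ from $p_i$; if any clause becomes empty, report that no multilinear monomial exists. (3) When the process stabilizes, every clause with a single surviving term is "committed"; check that the union of the variable sets over all committed terms is itself multilinear (no variable repeated), and then argue that the remaining two-term clauses can be satisfied greedily. The correctness argument splits into soundness — every purge step removes only terms that cannot appear in any multilinear monomial, which is immediate from the definition of conflict — and completeness — if the process terminates without emptying a clause, a multilinear monomial actually exists. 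For running time, each purge sweep costs $O(m^2 t)$ or so (comparing each term against each clause, each comparison at cost $O(t)$ or $O(t^2)$ to intersect variable sets, using sorted lists or a bitmask of size $n$), and each sweep that changes something removes at least one term, so there are at most $2m$ effective sweeps, giving a polynomial — indeed quadratic-type — bound; I would then tune the bookkeeping (event queue of newly committed clauses rather than naive re-sweeping) to pin down the exponent claimed.

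The main obstacle I anticipate is the completeness direction, specifically handling the two-term clauses that survive purging. After stabilization, no surviving term conflicts with all terms of another clause, but it is not obvious that one can simultaneously pick one term from each surviving two-term clause with no variable collision at all — a priori this looks like it could still encode a constraint satisfaction problem. The resolution I would pursue is to show that the two-term structure forces a chain/implication-like behavior even on the residual instance: once the committed terms are fixed and their variables removed from everything, pick any two-term clause, commit one of its terms, remove its variables, and re-run purge; the two-term bound guarantees that whenever a choice is bad it is detected locally (it empties some clause), and a standard argument shows that if \emph{both} choices for that clause lead to failure then the instance was already infeasible — so a single pass of commit-and-propagate, backtracking at most once per clause, suffices. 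Making this chain argument airtight, and confirming it does not blow up past quadratic time, is where the real work lies; everything else is routine bookkeeping on variable sets.
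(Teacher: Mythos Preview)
Your proposal is correct and takes essentially the same route as the paper: a 2SAT-style ``commit a term, unit-propagate the forced choices, and backtrack at most once per starting clause'' algorithm, with the observation that if propagation halts without contradiction the touched and untouched clauses are variable-disjoint so one can recurse on the remainder. Your step~(2) preprocessing purge is an extra (sound but inessential) pass that the paper omits, and the completeness argument you flagged as the main obstacle is exactly the paper's scenario~(1)---if both terms of the starting clause force an empty clause via propagation, the instance is infeasible---so there is no real gap.
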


\begin{proof}
Let $f = f_1\cdots f_m$ be any given $\Pi_m\Sigma_2\Pi_t$ polynomial such that
$f_i = (T_{i1}+T_{i2})$ and each term has degree at most $t$. Let
$$
\pi = \pi_1 \cdots \pi_m
$$
be any monomial in the sum-product expansion of $f$. Here term
$\pi$ is either $T_{i1}$ or $T_{i2}$, $1\le i\le m$. Observe that
$\pi$ is multilinear, iff any two terms in it must not share a
common variable. We now devise a {\em "purging"} based algorithm
to decide whether a multilinear monomial $\pi$ exists in $f$. The
purging part of this algorithm is similar to what is used in the
proof of Theorem \ref{linear-thm2}.

The purging algorithm works as follows. We select any clause $f_i$
from $f$, and choose a term  in $f_i$ for $\pi_i$. we purge all
the terms in the remaining clauses that share a common variable
with $\pi_i$. Once we find one clause with one term being purged
but with the other left, we then choose this remaining term in
that clause to repeat the purging process.

The purging stops for $\pi_i$ when one of the three possible
scenarios happens:

(1) We find one clause $f_j$ with two terms being purged. In this
case, any of the two terms in $f_j$ cannot be chosen to form a
multilinear monomial along with $\pi_i$. So, we have to choose the
other term in $f_i$ for $\pi$, if that term has not been chosen.
We use this $\pi_i$ to repeat the same purging process. If $f_i$
has not term left, then this means that neither term in $f_i$ can
be chosen to form a multilinear monomial, so the answer is {\em
"NO"}.

(2) We find that every clause $f_j$ contributes one term $\pi_j$
during the purging process. This means that $\pi = \pi_1 \cdots
\pi_m$ has no variables appearing more than once, hence it is a
multilinear monomial, so an answer {\em "YES"} is obtained.

(3) We find that the purging process fails to purge any terms in a
subset of clauses. Let $S\subset I$ denote the index of these
clause, where $I = \{1, \ldots, m\}$. Let $\pi'$ be the product of
$\pi_j$ with $j \in I - S$. According to the purging process,
$\pi'$ does not share any common variables with terms in any
clause $f_u$ with $u\in S$. Hence, the input polynomial $f$ has a
multilinear monomial iff the product of those clauses $f_u$ has a
multilinear monomial. Therefore, we recursively to apply the
purging process to this product of clauses. Note that this product
has at least one fewer clause than $f$.

With the help of some simple data structure, the purging process can be implemented in quadratic time.
\end{proof}

\section{$\Pi\Sigma_2\Pi\times \Pi\Sigma$ Polynomials vs.
$\Pi\Sigma_2\Pi$ and $\Pi\Sigma$ Polynomials}

In structure, a $\Pi\Sigma_2\Pi\times \Pi\Sigma$ polynomial is a
product of one $\Pi\Sigma_2\Pi$ polynomial and another $\Pi\Sigma$
polynomial. This structural characteristic is somehow related to
polynomial factorization. It has been shown in Sections 4 and 5
that testing multilinear monomials in $\Pi\Sigma_2\Pi$ or
$\Pi\Sigma$ polynomials can be done respectively in polynomial
time. This might encourage one to think that testing multilinear
monomials in $\Pi\Sigma_2\Pi\times \Pi\Sigma$ polynomials could
also be done in polynomial time. However, a little bit
surprisingly the following theorem shows that a complexity
boundary exists, separating $\Pi\Sigma_2\Pi\times \Pi\Sigma$
polynomials from $\Pi\Sigma_2\Pi$ and $\Pi\Sigma$ polynomials.

\begin{theorem}\label{product-thm}
The problem of testing multilinear monomials in $\Pi\Sigma_2\Pi\times \Pi\Sigma$
polynomials is NP-complete.
\end{theorem}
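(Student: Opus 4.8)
The plan is to prove NP-hardness by reducing 3SAT (again) to the multilinear monomial testing problem for $\Pi\Sigma_2\Pi\times\Pi\Sigma$ polynomials, membership in NP being clear since one can nondeterministically pick a term from each clause of the $\Pi\Sigma_2\Pi$ factor and each clause of the $\Pi\Sigma$ factor and then check multilinearity in polynomial time. The key idea is to split the work: the $\Pi\Sigma_2$ part will be used to encode a consistent truth assignment to the variables (each variable contributes a two-term clause, one term forcing ``true'', the other forcing ``false''), while the $\Pi\Sigma$ part will be used to encode clause satisfaction. Concretely, for a 3SAT instance $f=f_1\wedge\cdots\wedge f_m$ over variables $x_1,\dots,x_n$, I would first apply the same preprocessing as in Theorem~\ref{mlm-thm} so that each variable occurs a bounded number of times, and introduce, for each variable $x_i$ and each literal-occurrence, fresh ``slot'' variables.

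The construction I have in mind: for each variable $x_i$ introduce a clause $(A_i + B_i)$ in the $\Pi\Sigma_2\Pi$ factor $p_1$, where $A_i$ is a product of the fresh variables that represent the occurrences of $x_i$ made ``active'' by setting $x_i$ true (these should be exactly the occurrences one does \emph{not} want reused) and $B_i$ the analogous product for $x_i$ false; selecting $A_i$ versus $B_i$ is the truth assignment, and because $p_1$ is a product over all $i$ these choices are made independently. Then for each clause $f_j=(\ell_{j1}\vee\ell_{j2}\vee\ell_{j3})$ put a clause $(z_{j1}+z_{j2}+z_{j3})$ in the $\Pi\Sigma$ factor $p_2$, where $z_{jk}$ is the single fresh variable naming the occurrence of literal $\ell_{jk}$ in $f_j$. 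The ``blocking'' variables have to be set up so that $z_{jk}$ clashes (shares a variable) with the chosen term of the variable-clause for the underlying variable of $\ell_{jk}$ \emph{precisely when} that literal is set false — so that a multilinear monomial of the whole product exists iff there is an assignment satisfying every $f_j$. This forces the degree-$2$ (or higher) terms only on the $p_1$ side, which is exactly what a $\Pi\Sigma_2\Pi$ factor allows while $p_2$ stays $\Pi\Sigma$.

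The two directions of correctness are then routine: given a satisfying assignment, choose $A_i$ or $B_i$ according to the truth value of $x_i$ and, in clause $f_j$, choose $z_{jk}$ for some literal $\ell_{jk}$ made true — no clashes arise, so the product is multilinear. Conversely, from a multilinear monomial read off the assignment from the $A_i/B_i$ choices; multilinearity forces the term chosen in each $p_2$-clause to correspond to a literal not blocked by $p_1$, i.e.\ a true literal, so every $f_j$ is satisfied. The main obstacle I anticipate is bookkeeping the ``blocking'' incidences correctly: I need $z_{jk}$ to be a \emph{single} variable (to keep $p_2$ a genuine $\Pi\Sigma$ polynomial), yet it must simultaneously be able to conflict with the false-setting term of its variable's clause, so the fresh variable names and which ones go into $A_i$ versus $B_i$ must be chosen so that exactly the intended conflicts occur and no spurious ones — and one must double-check that two different $p_2$-clauses never force a conflict between each other. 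Keeping variable occurrences bounded via the Theorem~\ref{mlm-thm} preprocessing is what makes this incidence structure manageable. I would close by noting the reduction is clearly polynomial-time, completing the proof.
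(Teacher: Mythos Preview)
Your plan is sound and would yield a correct reduction, but it is \emph{not} the route the paper takes. The paper does not reduce from 3SAT directly; instead it reduces from the multilinear-monomial problem for $\Pi_m\Sigma_3\Pi_t$ polynomials (already shown NP-hard in Theorem~\ref{mlm-thm}) via a purely local ``one-of-three'' gadget: each clause $(T_{i1}+T_{i2}+T_{i3})$ is replaced by
\[
(T_{i1}u_i+v_i)(T_{i2}u_i+w_i)(T_{i3}u_i+z_i)\,(v_i+w_i+z_i),
\]
with fresh $u_i,v_i,w_i,z_i$, so the first three factors land in the $\Pi\Sigma_2\Pi$ part and the fourth in the $\Pi\Sigma$ part; multilinearity in the new variables forces exactly one $T_{ij}$ to survive. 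Your construction is instead a global ``assignment vs.\ satisfaction'' split: the $\Pi\Sigma_2\Pi$ factor carries one two-term clause per \emph{variable} (encoding a truth value), and the $\Pi\Sigma$ factor carries one three-term clause per \emph{CNF clause} (selecting a witnessing literal), with shared occurrence-variables creating the conflicts. The paper's gadget is more modular and piggy-backs on Theorem~\ref{mlm-thm}; yours is more direct and makes the respective roles of the two factors transparent. One small point to tidy in your write-up: your description of what goes into $A_i$ is inverted---selecting $A_i$ (i.e.\ ``$x_i$ true'') must \emph{block} the occurrence-variables of the \emph{negative} literals of $x_i$, so $A_i$ should be the product of the $z_{jk}$ for occurrences $\ell_{jk}=\bar{x}_i$ (and symmetrically for $B_i$); you correctly flag this bookkeeping as the delicate step, but as written the parenthetical has the polarity the wrong way round. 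You should also note that if some $x_i$ occurs with only one polarity the corresponding $A_i$ or $B_i$ is empty; handle this by a dummy variable or by preprocessing such variables away.
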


\begin{proof}
It is easy to see that the given problem is in NP. To show that
the problem is also  NP-hard, we consider any given
$\Pi_m\Sigma_3\Pi_t$ polynomial $f = f_1 \cdots f_m$ with $m\ge 1$
and $t\ge 2$ such that each clause $f_i = (T_{i1}+T_{i2}+T_{i3})$
and each term $T_{ij}$ has degree at most $t$, $1\le i\le m$,
$1\le j\le 3$. We shall reduce $f$ into a $\Pi\Sigma_2\Pi\times
\Pi\Sigma$ polynomial. Once this is done, the NP-hardness of the
given problem follows from Theorem \ref{mlm-thm}.

We consider the clause
$$
f_i = (T_{i1}+T_{i2}+T_{i3}).
$$
We want to represent $f_i$ by a $\Pi\Sigma_2\Pi\times \Pi\Sigma$
polynomial so that selecting exactly one term  from $f_i$ is
equivalent to selecting exactly one monomial from the new
polynomial with exactly one term $T_{ij}$ in $f_i$ under the
constraint that the newly introduced variables are linear in the
monomial. We construct the new polynomial, denoted by $p(f_i)$, as
follows.
$$
p(f_i) = (T_{i1}u_i + v_i)(T_{i2}u_i + w_i)(T_{i3}u_i + z_i)(v_i + w_i + z_i),
$$
where $u_i, v_i, w_i$ and $z_i$ are new variables.
It is easy to see that there are only three monomials in $p(f_i)$ satisfying the constraint:
$$
T_{i1} u_i v_i w_i z_i,~~ T_{i2} u_i v_i w_i z_i, \mbox{~~and}~~
T_{i3} u_i v_i w_i z_i.
$$
Each of those three monomials corresponds to exactly one term in $f_i$.
Now, let
$$
p(f) = p(f_1)\cdots p(f_m)
$$
be the new polynomial representing $f$ and
$$
\pi = \pi_1 \cdots \pi_m
$$
be a monomial in $f$ with terms $\pi_i$ in $f_i$. If $\pi$ is
multilinear, then so is
$$
\pi' = (\pi_1 u_1 v_1 w_1 z_1)\cdots (\pi_m u_m v_m w_m z_m)
$$
in $p(f)$. On  the other hand, if
$$
\psi = \psi_1 \cdots \psi_m
$$
is multilinear monomial in $p(f)$, then $\psi_i = T_{ij_i} u_i v_i w_i z_i$ with $j_i \in \{1, 2, 3\}$.
This implies that
$$
\psi' = T_{1j_1} \cdots T_{mj_m}
$$
must be a multilinear monomial in $f$. Obviously, the reduction
from $f$ to $p(f)$ can be done in polynomial time.
\end{proof}

\section{Testing $c$-Monomials}\label{c-monomials}

By definition, a multilinear monomial is a $2$-monomial. It has
been shown in Section 5 that the problem of testing multilinear
monomials in a $\Pi\Sigma_2\Pi$ polynomial is solvable in
quadratic time. We shall  show that  another complexity boundary
exists to separate $c$-monomials from $1$-monomials, even when
$c=3$. On the positive side, we shall show that it is efficient to
testing $c$-monomials for $\Pi\Sigma$ polynomials.

\begin{theorem}
The problem of testing $3$-monomials in any
$3$-$\Pi_m\Sigma_2\Pi_6$ polynomial is NP-complete.
\end{theorem}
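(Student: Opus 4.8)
The plan is to reduce from the multilinear monomial testing problem for $2$-$\Pi_m\Sigma_3\Pi_2$ polynomials, which is NP-hard by Theorem~\ref{mlm-thm}, and also to observe NP membership. Membership is routine: given a $3$-$\Pi_m\Sigma_2\Pi_6$ polynomial $p = p_1\cdots p_m$, nondeterministically pick a term from each clause, multiply, and check that every variable appears with exponent at most $2$; this is a polynomial-time verifiable certificate, so the problem is in NP.

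For hardness, let $f = f_1\cdots f_m$ be a $2$-$\Pi_m\Sigma_3\Pi_2$ polynomial, so each clause $f_i = (T_{i1}+T_{i2}+T_{i3})$ with each term $T_{ij}$ a product of at most two distinct variables (degree $\le 2$). I want to build a $3$-$\Pi_m\Sigma_2\Pi_6$ polynomial $q$ — two terms per clause, each term using exactly $3$ distinct variables, each term of degree $\le 6$ — so that $q$ has a $3$-monomial (all exponents $\le 2$, i.e. a $2$-monomial in the paper's terminology applied to the relevant variable set... note the $3$-monomial condition is "$1\le j_s < 3$", i.e. exponents $1$ or $2$) if and only if $f$ has a multilinear monomial. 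The natural idea is to mimic the gadget construction of Theorem~\ref{product-thm}: replace each three-term clause $f_i$ by a small product of two-term clauses over fresh private variables $u_i,v_i,w_i,z_i$, namely something in the spirit of $(T_{i1}u_i+v_i)(T_{i2}u_i+w_i)(T_{i3}u_i+z_i)(v_i+w_i+z_i)$, and then pad each term so that it has exactly $3$ distinct variables. Since a $T_{ij}$ contributes at most $2$ variables and $u_i$ one more, the term $T_{ij}u_i$ already has $\le 3$ distinct variables; I pad short terms and the "selector" terms $v_i, w_i, z_i$ with additional fresh dummy variables to hit exactly $3$. The degree stays well under $6$: $\deg(T_{ij}u_i)\le 3$ plus padding. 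The multilinearity-versus-$3$-monomial subtlety must be handled carefully: across the whole product each original variable $x_k$ of $f$ may occur in several terms $T_{ij}$, and in $f$ we demanded it occur \emph{at most once} (multilinear), whereas here we only get to demand \emph{at most twice}. To force the reduction through, I would duplicate each original variable: use two fresh copies so that "appears at most twice among the chosen $T_{ij}$'s in the new polynomial" faithfully encodes "appears at most once in $f$". Concretely, replace the second occurrence bookkeeping using the same trick as in the proof of Theorem~\ref{mlm-thm} (split occurrences into $y_{k1},y_{k2}$ and use products $y_{k1}y_{k2}$ for the negated copy) so that a genuine multilinear selection in $f$ maps to a selection in $q$ where every variable's exponent is $\le 2$, and conversely.

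The key steps, in order: (i) state NP membership; (ii) fix the source instance as a $2$-$\Pi_m\Sigma_3\Pi_2$ polynomial and recall its hardness from Theorem~\ref{mlm-thm}; (iii) define the per-clause gadget $p(f_i)$ as a product of two-term $\Pi\Sigma_6$-clauses over private variables, padding every term with dummy fresh variables to exactly $3$ distinct variables and verifying $\deg\le 6$; (iv) set $q = \prod_i p(f_i)$ and argue the forward direction — a multilinear monomial $\pi=\prod\pi_i$ of $f$ lifts to a monomial of $q$ in which each private/dummy variable appears once and each original variable appears with exponent at most $2$; (v) argue the reverse direction — any $3$-monomial of $q$ must, by the selector clause $(v_i+w_i+z_i)$ and the structure of the $u_i$-clauses, pick exactly one of $T_{i1},T_{i2},T_{i3}$ per block, and the exponent-$\le 2$ bound then forces the corresponding selection in $f$ to be multilinear; (vi) note the reduction is polynomial time. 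I expect the main obstacle to be step (iii)–(v): getting a single gadget that simultaneously satisfies \emph{all} the rigid parameters of $3$-$\Pi_m\Sigma_2\Pi_6$ (exactly two terms per clause, exactly three distinct variables per term, degree at most six) while still enforcing "choose exactly one of three" \emph{and} correctly translating the gap between multilinear ($1$-monomial) at the source and $3$-monomial ($2$-monomial) at the target — in particular making sure the dummy padding variables cannot be abused to satisfy the exponent constraint spuriously, which is why each padding variable must be fresh and private to its term.
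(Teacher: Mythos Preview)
Your proposal has a genuine gap at exactly the point you flagged as the ``main obstacle.'' Two concrete problems:

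\textbf{(a) The selector clause violates $\Sigma_2$.} Your per-clause gadget ends with the three-term clause $(v_i+w_i+z_i)$. That is not a $\Pi\Sigma_2\Pi$ clause, and you never say how to simulate it with two-term clauses. The whole point of the theorem is that two terms per clause already suffice for hardness, so this is not a detail to be patched later; it is the crux.

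\textbf{(b) The multilinear-versus-$3$-monomial bridge is missing.} With your gadget the selected contribution from block $i$ is $T_{ij_i}u_i\cdot(\text{private stuff})$, so each original variable $x$ enters the product with exponent equal to the number of selected $T_{ij_i}$'s containing it. The $3$-monomial condition only bounds this by $2$, not by $1$, so a non-multilinear selection in $f$ (one where some $x$ is picked twice) still yields a valid $3$-monomial in $q$. Your proposed fix --- ``duplicate each original variable'' via the $y_{k1},y_{k2}$ encoding of Theorem~\ref{mlm-thm} --- does not address this: that encoding was designed to model boolean negation inside a multilinear target, not to tighten an exponent-$2$ bound down to exponent~$1$. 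Replacing $x$ by $y_{k1}$ or $y_{k2}$ in different occurrences actually \emph{loosens} the constraint, since two occurrences of $x$ would now be two occurrences of distinct variables.

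The paper's construction resolves both issues with a single idea: \emph{square} the original terms and the block variable. For each clause $f_i=(T_{i1}+T_{i2}+T_{i3})$ it sets
\[
p(f_i)=(T_{i1}^2u_i^2+v_i)(T_{i2}^2u_i^2+v_i)(T_{i3}^2u_i^2+v_i),
\]
which is already $\Pi\Sigma_2\Pi_6$ with at most three distinct variables per term (no selector clause needed). The exponent-$2$ bound on $u_i$ forces at most one ``left'' term per block; the exponent-$2$ bound on $v_i$ forces at most two ``right'' terms; together exactly one $T_{ij_i}$ is selected. And because each selected $T_{ij_i}$ now contributes its variables \emph{squared}, any variable appearing in two different selected terms would get exponent $\ge 4$, so the $3$-monomial condition on the product forces the selection $\prod_i T_{ij_i}$ to be multilinear in $f$. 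This squaring trick is precisely the missing ingredient in your plan.
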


\begin{proof}
We only need to show that the problem is NP-hard, since it is
trivial to see that the problem is in NP.

Let $f = f_1 \cdots f_m$ be any given $2$-$\Pi_m\Sigma_3\Pi_2$
polynomial, where each clause $f_i = (T_{i1}+T_{i2}+T_{i3})$ and
each term $T_{ij}$ is multilinear with at most $2$ distinct
variables, $1\le i\le m$, $1\le j\le 3$. By Theorem \ref{mlm-thm},
testing whether $f$ has a multilinear monomial is NP-hard. We now
show how to construct a $3$-$\Pi_m\Sigma_2\Pi_6$ polynomial to
represent $f$ with the property that $p$ has a multilinear
monomial iff the new polynomial has a $2$-monomial.

We consider the clause
$$
f_i = (T_{i1}+T_{i2}+T_{i3}).
$$
We want to represent $f_i$ by a $3$-$\Pi\Sigma_2\Pi_6$ polynomial
so that selecting exactly one term  from $f_i$ is equivalent to
selecting exactly one $2$-monomial from the new polynomial with
exactly one term $T_{ij}$ in $f_i$ under the constraints that
$T_{ij}$ appears twice and the newly introduced variables are each
of degree $2$. The idea for constructing the new polynomial seems
like what is used in the proof of Theorem \ref{product-thm}, but
it is different from that construction.  We design the new
polynomial, denoted by $p(f_i)$, as follows.
$$
p(f_i) = (T_{i1} T_{i1} u_i^2 + v_i)(T_{i2} T_{i2} u_i^2 + v_i)(T_{i3} T_{i3} u_i^2 + v_i)
$$
where $u_i$ and  $v_i$ are new variables. Since each term $T_{ij}$
is multilinear with at most two distinct variables, $p(f_i)$ is a
$3$-$\Pi_m\Sigma_2\Pi_6$ polynomial. It is easy to see that there
are no multilinear monomials in $p(f_i)$. But there are three
monomials in $p(f_i)$ satisfying the given constraints:
$$
T_{i1} T_{i1} u_i^2 v_i^2,~~ T_{i2} T_{i2} u_i^2 v_i^2,\
\mbox{~and~} \ T_{i3} T_{i3} u_i^2 v_i^2.
$$
Each of those three monomials corresponds to exactly one term in
$f_i$. Note that only those three monomials in $p(f_i)$ can
possibly be $3$-monomials, depending on whether  $T_{ij}T_{ij}$ is
a $3$-monomials.  Now, let
$$
p(f) = p(f_i)\cdots p(f_m)
$$
be the new polynomial representing $f$ and
$$
\pi = \pi_1 \cdots \pi_m
$$
be a monomial in $f$ with terms $\pi_i$ in $f_i$. If $\pi$ is
multilinear, then
$$
\pi' = (\pi_1 \pi_1 u_1^2 v_1^2)\cdots (\pi_m \pi_m u_m^2 v_m^2)
$$
is a $3$-monomial in $p(f)$. On  the other hand, if
$$
\psi = \psi_1 \cdots \psi_m
$$
is a $3$-monomial in $p(f)$, then $\psi_i = T_{ij_i} T_{ij_i}
u_i^2 v_i^2$ with $j_i \in \{1, 2, 3\}$. This implies that
$$
\psi' = T_{1j_1} T_{1j_1} \cdots T_{mj_m}T_{mj_m}
$$
is a $3$-monomial. Therefore,
$$
\psi'' = T_{1j_1}  \cdots T_{mj_m}
$$
must be a multilinear monomial in $f$. Obviously, reducing $f$  to
$p(f)$ can be done in polynomial time.
\end{proof}

The following corollaries follows immediately from Theorem
\ref{c-monomials}:

\begin{corollary}
For any $c > 2$, testing $c$-monomials in any $\Pi_m\Sigma_s\Pi_t$
polynomial is NP-complete.
\end{corollary}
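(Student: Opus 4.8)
The plan is to establish the two directions of NP-completeness in turn. Containment in NP is immediate and independent of $c$: given a $\Pi_m\Sigma_s\Pi_t$ polynomial $p=p_1\cdots p_m$, a nondeterministic procedure guesses one term $\pi_i$ from each clause $p_i$, forms the monomial $\pi_1\cdots\pi_m$ (of degree at most $mt$, in polynomial time), and accepts iff every exponent occurring in it is strictly less than $c$.

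For NP-hardness the case $c=3$ is precisely Theorem~\ref{c-monomials}, so I would assume $c\ge 4$ and reduce from the problem of testing $3$-monomials in $3$-$\Pi_m\Sigma_2\Pi_6$ polynomials, which is NP-hard by Theorem~\ref{c-monomials}. Given such a polynomial $q=q_1\cdots q_m$ on variables $x_1,\dots,x_n$, set
$$
q' \;=\; q\cdot(x_1^{c-3})\cdots(x_n^{c-3}),
$$
where each factor $(x_i^{c-3})$ is a one-term clause of degree $c-3$. Since $1\le 1\le s$ and each appended clause has degree $c-3$, $q'$ is again a $\Pi\Sigma\Pi$ polynomial (with at most two terms per clause and term degree at most $\max\{6,c-3\}$), and it is produced from $q$ in polynomial time for any fixed $c$.

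The heart of the proof is then the equivalence ``$q'$ has a $c$-monomial iff $q$ has a $3$-monomial''. Every monomial of $q'$ has the form $\mu'=\mu\cdot x_1^{c-3}\cdots x_n^{c-3}$ for a monomial $\mu$ of $q$; if $x_i$ has exponent $a_i\ge 0$ in $\mu$, then it has exponent $a_i+(c-3)$ in $\mu'$. Because $c\ge 4$, every $x_i$ genuinely occurs in $\mu'$, and $a_i+(c-3)\le c-1$ holds exactly when $a_i\le 2$. Hence $\mu'$ is a $c$-monomial exactly when every variable occurring in $\mu$ has exponent at most $2$, i.e. exactly when $\mu$ is a $3$-monomial. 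Ranging over all $\mu$ gives the claimed equivalence, so $c$-monomial testing for $\Pi\Sigma\Pi$ polynomials is NP-hard, and with membership in NP it is NP-complete.

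I expect no real obstacle here; the only point that needs care is the exponent bookkeeping --- that the shift by exactly $c-3$ sends the admissible exponent set $\{0,1,2\}$ of a $3$-monomial into $\{c-3,c-2,c-1\}\subseteq\{0,\dots,c-1\}$ while sending the first forbidden exponent $3$ to the forbidden value $c$ --- together with the routine check that the appended single-term clauses form a legitimate $\Pi\Sigma\Pi$ structure. One could instead generalize the construction in the proof of Theorem~\ref{c-monomials} directly, replacing every square by a $(c-1)$-st power; but then one must also pad each block $p(f_i)$ with $c-3$ extra one-term clauses $(v_i)$ to suppress the monomial $v_i^{3}$, which is a spurious $c$-monomial once $c>3$. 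The exponent-shift reduction above avoids this complication and is the shorter route.
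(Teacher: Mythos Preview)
Your argument is correct. Membership in NP is handled exactly as in the paper's Section~3, and your exponent-shift reduction for $c\ge 4$ is sound: appending the one-term clauses $(x_i^{c-3})$ shifts every exponent by $c-3$, so the admissible range $\{1,2\}$ for a $3$-monomial maps bijectively into $\{c-3,c-2,c-1\}\subseteq\{1,\dots,c-1\}$, while any exponent $\ge 3$ lands at $\ge c$. The bookkeeping you flag (that the appended clauses are legitimate $\Pi\Sigma\Pi$ clauses with $r_i=1$ and degree $c-3$) checks out against Definition~1.

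The paper itself gives no argument at all for this corollary --- it simply asserts that it ``follows immediately'' from the preceding theorem. Your route is therefore strictly more explicit than the paper's. You are also right to point out that the naive generalization of the theorem's gadget (replacing squares by $(c-1)$-st powers) breaks for $c>3$ because the all-$v_i$ choice yields the spurious $c$-monomial $v_i^3$; the paper does not acknowledge this, and your exponent-shift reduction cleanly sidesteps the need for the extra $(v_i)$ padding you mention. In short, your proof is correct and fills a genuine gap that the paper leaves to the reader.
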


\begin{corollary}
For any $c> 2$, testing $c$-monomials in any $\Pi_m\Sigma_s\Pi_t$
polynomial represented by a formula or a general arithmetic
circuit is NP-complete.
\end{corollary}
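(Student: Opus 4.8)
The plan is to establish both directions with little extra work. For NP-hardness, observe that any $\Pi_m\Sigma_s\Pi_t$ polynomial is, essentially by definition, computed by a depth-$3$ formula — one product gate feeding from sum gates feeding from product gates feeding from variable terminals — of size polynomial in its description; in particular it is computed both by a formula and by a general arithmetic circuit of polynomial size. Hence the polynomial-time reduction behind the preceding corollary (that testing $c$-monomials in $\Pi_m\Sigma_s\Pi_t$ polynomials is NP-complete) already outputs instances presented in the "formula / general arithmetic circuit" form, so the NP-hardness of the present corollary is immediate. It therefore remains only to show membership in NP.

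For membership in NP, the first point is that $c$ is a fixed constant: a $c$-monomial $\pi = x_{i_1}^{j_1}\cdots x_{i_k}^{j_k}$ over the $n$ variables has $k\le n$ and each $j_\ell \le c-1$, so $\deg(\pi)\le (c-1)n$ and $\pi$ admits a description of length polynomial in $n$; thus $\pi$ can be guessed. The second point uses the paper's definition that a circuit's terminals are variables and there are no scalar constants: every coefficient in the sum-product expansion of the computed polynomial $p$ is then a nonnegative integer, there is no cancellation, and $\pi$ occurs in the expansion if and only if $\pi$ is produced by some parse tree $T$ of the circuit (writing $p=\sum_{T}\mathrm{mono}(T)$, the sum over parse trees, where $\mathrm{mono}(T)$ is the product of the variables at the leaves of $T$).

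The third point is that such a parse tree can be taken to be a polynomial-size certificate. A parse tree $T$ producing a monomial of degree $d=\deg(\pi)$ has a ``$\times$-skeleton'' — the subtree obtained by contracting all $+$-gates — that is a binary tree with exactly $d$ leaves and hence $d-1$ internal $\times$-nodes; each edge of this skeleton, together with the stretch above its top $\times$-node, is realized inside the DAG by a directed path consisting only of $+$-gates, which, being simple by acyclicity, has length at most $s$, the circuit size. Hence $T$ has $O(ds)=O((c-1)ns)$ nodes, polynomial in the input. The NP algorithm thus guesses $\pi$ together with a candidate parse tree $T$, and verifies in polynomial time that $T$ is a legal parse tree of the given circuit (or formula) and that $\mathrm{mono}(T)=\pi$; since a formula is a special case of a circuit, both representations are covered, and combined with the NP-hardness above this gives NP-completeness.

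The step I expect to need the most care is this size bound on the parse-tree certificate — in particular the claim that the chains of $+$-gates linking consecutive $\times$-gates stay polynomially bounded, which is exactly where acyclicity is used. A related point worth stating explicitly is that the argument genuinely relies on terminals being variables: were scalar constants allowed, a monomial's coefficient could vanish through cancellation even though some parse tree produces it, and membership in NP would then require certifying the value of a coefficient rather than the mere existence of a derivation — a more delicate, $\#\mathrm{P}$-flavored task.
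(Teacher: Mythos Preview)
Your proof is correct. The paper itself gives no argument for this corollary beyond the remark that it ``follows immediately'' from the theorem on $3$-monomials in $3$-$\Pi_m\Sigma_2\Pi_6$ polynomials; in particular, the paper never explicitly justifies NP-membership for the general-circuit representation. Your NP-hardness argument is exactly the implicit one --- the depth-$3$ $\Pi\Sigma\Pi$ instance produced by the reduction is already a formula, hence a circuit --- so on that half you match the paper. Where you go further is in the NP-membership direction: you supply the parse-tree certificate argument, bounding its size by $O((c-1)n\cdot s)$ via the $\times$-skeleton / $+$-chain decomposition and acyclicity. This is a genuine addition over the paper, since for general circuits the computed polynomial can have exponential degree, and it is only the degree bound $(c-1)n$ on a $c$-monomial together with your chain-length bound that keeps the certificate polynomial. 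Your caveat about the absence of scalar constants (hence no cancellation) is also well placed and matches the paper's circuit model; without it the problem would indeed shift character.
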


Recall that by Theorem \ref{linear-thm} the  multilinear monomial
testing problem for $\Pi\Sigma$ polynomials is solvable in
polynomial time. The following theorem shows a complementary
result about $c$-monomial testing for the same type of
polynomials.

\begin{theorem} There is a $O(cms\sqrt{m+cn})$ time algorithm to test
whether any $\Pi_m\Sigma_s$ polynomial has a $c$-monomial or not,
where $c>2$ is a fixed constant.
\end{theorem}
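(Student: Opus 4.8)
The plan is to adapt the bipartite-matching reduction from the proof of Theorem~\ref{linear-thm}, replacing ordinary matching by a degree-constrained matching. Let $f = f_1\cdots f_m$ be the given $\Pi_m\Sigma_s$ polynomial; as before we may assume each clause $f_i=\sum_{j=1}^s x_{ij}$ has exactly $s$ single-variable terms. Choosing one term from each clause produces a monomial $x_{i_1}\cdots x_{i_m}$ in which the exponent of a variable $x$ equals the number of clauses from which $x$ was selected. Hence $f$ has a $c$-monomial in its sum-product expansion if and only if one can pick one variable from each clause so that no variable is picked more than $c-1$ times; the lower bound $j\ge 1$ is automatic for any variable that is actually picked.

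First I would build the bipartite graph $G=(V_1\cup V_2,E)$ exactly as in Theorem~\ref{linear-thm}: $V_1=\{v_1,\dots,v_m\}$ with $v_i$ representing $f_i$, $V_2=\{x_1,\dots,x_n\}$, and $(v_i,x_j)\in E$ whenever $x_j$ occurs in $f_i$. The task above is then to find a subgraph of $G$ in which every $v_i$ has degree $1$ and every $x_j$ has degree at most $c-1$. To turn this into an ordinary matching problem, I would make $c-1$ copies $x_j^{(1)},\dots,x_j^{(c-1)}$ of each variable vertex, obtaining a graph $G'=(V_1\cup V_2',E')$ with $V_2'=\{x_j^{(\ell)} : 1\le j\le n,\ 1\le \ell\le c-1\}$ and an edge from $v_i$ to every copy of $x_j$ whenever $(v_i,x_j)\in E$.

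The key correspondence to establish is: $f$ has a $c$-monomial iff $G'$ has a matching of size $m$. For the forward direction, from a $c$-monomial $x_{i_1}\cdots x_{i_m}$ with $x_{i_j}$ taken from $f_j$, each variable is used at most $c-1$ times, so the clauses using a given variable $x$ can be assigned injectively to the $c-1$ copies of $x$, yielding a matching saturating $V_1$. Conversely, a matching of size $m$ must saturate $V_1$ since $|V_1|=m$; identifying each matched copy $x_j^{(\ell)}$ with $x_j$ gives a choice of one variable per clause in which each variable occurs at most $c-1$ times (at most once per copy), i.e., a $c$-monomial. Both directions are routine; the only point needing care is that the copies enforce exactly the $\le c-1$ multiplicity bound, which is immediate from the construction.

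Finally I would bound the running time. The graph $G'$ has $|V_1\cup V_2'| = m+(c-1)n = O(m+cn)$ vertices, and since each of the at most $ms$ original edges is replaced by $c-1$ edges, $|E'| = O(cms)$. Computing a maximum matching in a bipartite graph takes $O(|E|\sqrt{|V|})$ time \cite{aspvall-plass-tarjan79}, so the whole test runs in $O(cms\sqrt{m+cn})$ time, as claimed. I do not expect a serious obstacle here: the construction is a direct degree-constrained generalization of Theorem~\ref{linear-thm}, and the only mild care needed is in the vertex/edge count so that the stated bound comes out with the correct dependence on $c$.
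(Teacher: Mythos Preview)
Your proposal is correct and essentially identical to the paper's own proof: both generalize the matching reduction of Theorem~\ref{linear-thm} by replacing each variable vertex with $c-1$ copies and then checking for a size-$m$ matching, yielding the same $O(cms\sqrt{m+cn})$ bound. The only cosmetic difference is that you describe the construction in two stages (build $G$, then split vertices to get $G'$) whereas the paper builds the copied graph directly; also note that the paper here cites \cite{hopcroft-karp73} rather than \cite{aspvall-plass-tarjan79} for the $O(|E|\sqrt{|V|})$ matching algorithm.
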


\begin{proof}
We consider to generalize the maximum matching reduction in
Theorem \ref{linear-thm}. Like before, Let $f(x_1,\ldots,x_n)=f_1
\ldots f_m$ be any given $\Pi_m\Sigma_s$ polynomial such that $f_i
= \sum^{s}_{j=1} x_{i_j}$, $1\le i\le s$. We construct a bipartite
graph $G=(V_1 \cup V_2, E)$ as follows. $V_1 = \{v_1,\ldots,v_m\}$
so that each $v_i$ represents the clause $f_i$. $V_2 =
\cup^n_{i=1} \{u_{i1}, u_{i2}, \ldots, u_{i(c-1)}\}$, i.e., each
variable $x_i$ corresponds to $c-1$ vertices $u_{i1}, u_{i2},
\ldots, u_{i(c-1)}$. For each clause $f_i$, if it contains a
variable $x_j$ then we add $c-1$ edges $(v_i, u_{jt})$ into $E$,
$1\le t\le c-1$.

Suppose that $f(x_1,\ldots,x_n)$ has a $c$-monomial
$$
\pi = x_{i_1} \cdots x_{i_m}
$$
with $x_{i_j}$ in $f_j$, $1\le j\le m$. Note that each variable
$x_{i_j}$ appears $k(x_{i_j}) < c$ times in $\pi$. Those
appearances correspond to $k(x_{i_j})$ clauses $f_{t_1}, \ldots,
f_{t_k(x_{i_j})}$ from which $x_{i_j}$ was respectively selected
to form $\pi$. This implies that there are $k(x_{i_j})$ edges
matching $v_{t_1}, \ldots, v_{t_k(x_{i_j})}$ with $k(x_{i_j})$
vertices in $V_2$ that represent $x_{i_j}$. Hence, the collection
of $m$ edges for $m$ appearances of all the variables, repeated or
not, in $\pi$ forms a maximum matching of size $m$ in the graph G.

Now, assume that we have a maximum matching of size $m$
$$
(v_1, u_{i_1j_1}), \ldots, (v_m, u_{i_m j_m}).
$$
Recall that $u_{i_t j_t}$, $1\le t \le m$, is designed to
represent the variable $x_{i_t}$. By the construction of the graph
$G$, $x_{i_t}$ are in the clause $f_t$, $1\le t\le m$, and it may
appear $c-1$ times. Hence,
$$
\pi = x_{i_1} \cdots x_{i_m}
$$
is a $c$-monomial in $f(x_1,\ldots,x_n).$

With the help of the $O(|E|\sqrt{|V|})$ time algorithm
\cite{hopcroft-karp73} for finding a maximum matching in a
bipartite graph, testing whether $f(x_1,\ldots,x_n)$ has a
 $c$-monomial can done in $O(cms\sqrt{m+cn})$,
 since the graph $G$ has $m+cn$ vertices and at most $cms$ edges.
\end{proof}

\section{Parameterized Algorithms}

In this section, we shall devise two parameterized algorithms for
testing multilinear monomials in  $\Pi_m\Sigma_3\Pi_t$ and
$\Pi_m\Sigma_2\Pi_t \times \Pi_k\Sigma_3$ polynomials. By Theorems
\ref{mlm-thm} and \ref{product-thm}, the multilinear monomial
testing problem for each of these two types of polynomials is
NP-complete.

\begin{theorem}
There is a $O(tm^2 1.7751^m)$ time algorithm to test whether any
 $\Pi_m\Sigma_3\Pi_t$ polynomial has a multilinear monomial in its sum-product expansion.
\end{theorem}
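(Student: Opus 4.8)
The plan is to reduce the multilinear monomial testing problem for $\Pi_m\Sigma_3\Pi_t$ polynomials to an exact set cover / exact partition problem on the index set $\{1,\dots,m\}$ of clauses, and then invoke a known exponential-time algorithm for such a covering problem whose running time is $O^*(1.7751^m)$ when the sets involved have size at most $3$. Concretely, for a $\Pi_m\Sigma_3\Pi_t$ polynomial $f = f_1\cdots f_m$ with $f_i = (T_{i1}+T_{i2}+T_{i3})$, a choice of one term $\pi_i$ from each clause yields a multilinear monomial iff the chosen terms are pairwise variable-disjoint. The degree-$t$ bound means each term $T_{ij}$ involves at most $t$ variables, so there are only a polynomially bounded number of distinct variables that matter relative to $m$ locally, but in general $n$ can be large; the key structural fact to exploit is that each \emph{clause} contributes exactly one term, so we are partitioning the $m$ clause-slots, not the variables.

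First I would set up the correct combinatorial object. Group the $3m$ terms by which variables they use; two terms conflict if they share a variable. I want to find a system of representatives, one term per clause, forming an independent set in the ``conflict'' structure. The trick to get the $1.7751^m$ base is to phase the search over clauses in a way that branches on at most $3$ options per clause but prunes aggressively: when a term $\pi_i$ is selected, all terms in later clauses sharing a variable with $\pi_i$ are deleted (the purging idea from Theorems \ref{linear-thm2} and \ref{2-term-thm}); a clause that drops to $2$ or $1$ surviving terms is cheaper to branch on, and a clause that drops to $0$ kills the branch. So I would analyze a branching recurrence: picking among $3$ terms in the current clause, each choice removes that clause and reduces the number of remaining $3$-term clauses by at least some amount through purging. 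A naive recurrence $T(m) \le 3\,T(m-1)$ only gives $3^m$; to reach $1.7751^m$ I need a smarter measure-and-conquer analysis — e.g., counting clauses weighted by how many of their terms survive, or reducing to the known fact that \textsc{Exact 3-Satisfiability}-type / \textsc{3-Set Packing}-by-elements problems on $m$ sets are solvable in $1.7751^m$ (this constant is the one from the classical $O^*(1.7751^n)$ algorithm for counting/finding structures with local width $3$, e.g. from the literature on exact algorithms for 3-SAT-like constraint satisfaction). The cleanest route is to cite such an algorithm as a black box and show the reduction preserves the parameter $m$ and that each constraint has ``width $3$''. The factor $tm^2$ accounts for the polynomial overhead: maintaining variable-occurrence lists, performing each purge in $O(tm)$ time, and $O(m)$ purges along a root-to-leaf path.

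The main obstacle I expect is pinning down exactly which known $1.7751^m$ algorithm the reduction targets and verifying that the monomial-testing constraints genuinely have the required ``at most $3$ choices, bounded local interaction'' form — in particular, a single variable may occur in many clauses, so the conflict hypergraph is not bounded-degree, and one must be careful that the $1.7751^m$ bound depends only on the number of clauses $m$ (the number of constraints) and the max $3$ terms per clause, not on variable degree. I would resolve this by formulating the problem as: does there exist a function $\sigma:\{1,\dots,m\}\to\{1,2,3\}$ such that for all $i\ne j$, $T_{i\sigma(i)}$ and $T_{j\sigma(j)}$ are variable-disjoint — this is a binary CSP with $m$ variables of domain size $3$ and (possibly many) binary constraints, and then cite / adapt the branching algorithm for $(3,\text{arbitrary-number-of-constraints})$-CSP whose worst case is governed solely by the domain-size-$3$ branching with purging, giving the claimed $1.7751^m$. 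The remaining details — the measure-and-conquer recurrence that yields precisely $1.7751$, and the bookkeeping for the $tm^2$ polynomial factor — are routine once the framework is fixed, so I would state the recurrence, solve it, and defer the arithmetic.
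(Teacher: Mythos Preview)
Your core combinatorial observation is exactly right and matches the paper: selecting one term per clause gives a multilinear monomial iff the chosen terms are pairwise variable-disjoint, which is an independent-set/clique condition on the $3m$ terms. Where you diverge from the paper is in how you try to realize the $1.7751^m$ bound, and this is where a genuine gap appears.

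You are searching for a black-box exact algorithm for domain-$3$ binary CSP (or exact cover / set packing with width $3$) whose base happens to be $1.7751$, and you acknowledge that you cannot pin it down. There is no standard such result with that specific constant; the measure-and-conquer recurrences you sketch do not obviously settle at $1.7751$ either, and the purging argument you outline only gives $3^m$ without a careful new analysis that you do not supply. So as written the proof is incomplete: it reduces to an algorithm you have not identified and whose existence you have not established.

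The paper's route explains the constant immediately. Build the \emph{compatibility} graph on the $3m$ terms (an edge between $T_{ij}$ and $T_{i'j'}$, $i\neq i'$, iff they share no variable). A multilinear monomial exists iff this graph has an $m$-clique, necessarily a maximum clique since the three vertices from any one clause are mutually non-adjacent. Then apply Robson's $O(1.2108^{|V|})$ maximum-clique algorithm to this $|V|=3m$-vertex graph; since $1.2108^{3}\approx 1.7751$, the running time is $O^*(1.7751^m)$. The $tm^2$ factor is just the cost of building the graph: $O(m^2)$ pairs of terms, each comparison costing $O(t)$ to check for a shared variable. So the fix to your argument is simply to stop looking for a width-$3$ CSP algorithm and instead feed your conflict/compatibility graph to a general max-clique solver on $3m$ vertices.
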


\begin{proof}
Let $f = f_1 \cdots f_m$ be any given  $\Pi_m\Sigma_3\Pi_t$ polynomial, where each clause
$$
f_i = (T_{i1}+T_{i2}+T_{i3})
$$
and each term $T_{ij}$ has degree at most $t$, $1 \le i\le m$, $1
\le j\le 3$.

We now consider to reduce $f$ to an undirected graph $G = (V, E)$
such that $f$ has a multilinear monomial iff $G$ has a maximum
$m$-clique. For each clause $f_i$, we design three vertices
$v_{i1}, v_{i2}$ and $v_{i3}$, representing the three
corresponding terms in $f_i$. Let $V$ be the collection of those
vertices for all the terms in $f$. For any two vertices $v_{ij}$
and $v_{i'j'}$ with $ i\not= i'$, we add an edge $(v_{ij},
v_{i'j'})$ to $E$, if their corresponding terms $T_{ij}$ and
$T_{i'j'}$ do not share any common variable. Since any two
vertices designed for the terms in a clause are not connected, the
maximum cliques in $G$ could have $m$ vertices corresponding to
$m$ terms, each of which is in one of those $m$ clauses. Let
$$
\pi = \pi_1 \cdots \pi_m
$$
be any monomial in $f$ with $\pi$ being a term from $f_i$. We
consider two cases in the following.

Assume that $\pi$ is multilinear monomial. Let $\pi_i = T_{ij_i}$,
$j_i \in \{1, 2, 3\}$. Then, any two terms $T_{ij_i}$ and
$T_{i'j_{i'}}$ in $\pi$ do not share any common variable. So,
there is an edge $(v_{ij_i}, v_{i'j_{i'}})$ in $E$. Hence, the
graph $G$ has an $m$-clique $\{v_{1j_1}, \ldots, v_{mj_m}\}$.
Certainly, this clique is maximum.

Now, suppose that $G$ has a maximum clique $\{v_{1j_1}, \ldots,
v_{mj_m}\}$. Then, by the construction of $G$, each vertex
$v_{ij_i}$ corresponds to the term $T_{ij_i}$ in the clause $f_i$.
Thus, the product of those $m$ terms is a multilinear monomial,
because any two of those terms do not share a common variable.

Finally, we use Robson's  $O(1.2108^{|V|})$ algorithm to find a
maximum clique for $G$. If the clique has size $m$, then $f$ has a
multilinear monomial. Otherwise, it does not. Note that $|V| =
3m$. Combining the reduction time with the clique finding time
gives an overall $O(tm^2 1.7751^m)$ time.
\end{proof}

We now turn to $\Pi_m\Sigma_2\Pi_t \times \Pi_k\Sigma_3$
polynomials and give the second  parameterized algorithm for this
type of polynomials.

\begin{theorem}\label{2k-fixed-thm}
There is a $O((mk)^2 3^k)$ time algorithm to test whether  any
$\Pi_m\Sigma_2\Pi_t \times \Pi_k\Sigma_3$ polynomial has a
multilinear monomial in its sum-product expansion.
\end{theorem}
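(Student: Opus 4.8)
The plan is to combine the two techniques already developed in the paper: the $\Pi_k\Sigma_3$ factor is small (only $3^k$ products in its sum-product expansion), so I would brute-force over those, and for each fixed product I would reuse the quadratic-time purging algorithm for $\Pi_m\Sigma_2\Pi_t$ polynomials from Theorem \ref{2-term-thm}. Concretely, write $p = p_1 p_2$ where $p_1 = f_1\cdots f_m$ is a $\Pi_m\Sigma_2\Pi_t$ polynomial and $p_2 = g_1\cdots g_k$ is a $\Pi_k\Sigma_3$ polynomial. Enumerate the at most $3^k$ products $\psi$ in the sum-product expansion of $p_2$; this takes $O(k 3^k)$ time since each $\psi$ is a product of $k$ single variables. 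Then $p$ has a multilinear monomial in its sum-product expansion iff for some such $\psi$, the polynomial $\psi\cdot p_1$ does.

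For a fixed $\psi = y_1\cdots y_k$, I would first discard any $\psi$ that is itself not multilinear (a repeated variable among the $y_j$ kills it immediately). Otherwise I would perform the one-pass purging trick from the proof of Theorem \ref{linear-thm2}: for each clause $f_i$ of $p_1$, delete every term $T_{ij}$ that contains a variable occurring in $\psi$, since such a term can never appear in a multilinear monomial alongside $\psi$. If some clause loses both of its terms, this $\psi$ fails; otherwise we are left with a $\Pi_{m'}\Sigma_2\Pi_t$ polynomial $p_1'$ (with $m'\le m$, some clauses possibly reduced to a single term), and $\psi\cdot p_1$ has a multilinear monomial iff $p_1'$ does and no two surviving variables collide with each other — which is exactly what the purging algorithm of Theorem \ref{2-term-thm} decides, in quadratic time. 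Running this for all $3^k$ choices of $\psi$ gives total time $O(3^k) \cdot \big(O(k 3^0)+ \text{poly}(m,k)\big)$, which I would bound by $O((mk)^2 3^k)$ after accounting for the size of the purged instances and the cost of scanning $\psi$ against every term.

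The correctness argument in both directions is the routine one already rehearsed twice in the paper: a multilinear monomial $\pi$ of $p$ splits as $\psi\cdot\pi'$ with $\psi$ a product from $p_2$ and $\pi'$ a multilinear monomial of $p_1$ sharing no variable with $\psi$, which survives the purge; conversely any multilinear monomial found for a purged $p_1'$ recombines with its $\psi$ into one for $p$. The only point needing a little care — the main (minor) obstacle — is the timing bookkeeping: one must check that purging against $\psi$ over all $m$ clauses costs only $\text{poly}(m,k)$ (here the bound on term degree by $t$ and on $|\psi|$ by $k$ matters, though $t$ does not appear in the final bound because term-membership checks can be amortized), and that the quadratic-time subroutine is invoked on instances of size $O(mk)$ rather than something larger, so that the product with $3^k$ lands at $O((mk)^2 3^k)$.
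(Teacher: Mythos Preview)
Your proposal is correct and follows essentially the same approach as the paper: enumerate the at most $3^k$ products $\psi$ from the $\Pi_k\Sigma_3$ factor and, for each, reduce to the quadratic-time $\Pi\Sigma_2\Pi$ tester of Theorem~\ref{2-term-thm}. The only cosmetic difference is that the paper skips your explicit purging step by simply observing that $p_1\cdot\psi$ is itself a $\Pi_{m+1}\Sigma_2\Pi$ polynomial (with $\psi$ as one extra single-term clause) and applying Theorem~\ref{2-term-thm} to it directly; your purge is exactly what that algorithm would do on its first pass anyway.
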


\begin{proof}
Let $p = p_1 \cdot p_2$ such that $p_1$ is a $\Pi_m\Sigma_2\Pi_t$
polynomial and $p_2$ is a $\Pi_k\Sigma_3$ polynomial. In $O(3^k)$
time, we list all the products in the sum-product expansion of
$p_2$. Let ${\cal C}$ be the collection of those products. It is
obvious that $p$ has a multilinear monomial iff there is a product
$\pi \in {\cal C}$ such that $p_1 \cdot \pi$ has a multilinear
monomial. Note that $p_1 \cdot \pi$ is a $\Pi_(m+1)\Sigma_2\Pi_t$
polynomial. By Theorem \ref{2-term-thm}, the multilinear monomial
testing problem for $p_1 \cdot \pi$ can be solved by a quadratic
time algorithm. Hence, the theorem follows by applying that
algorithm to $p_1 \cdot \pi$ for every $\pi\in {\cal C}$ to see if
one of them has a multilinear monomial or not.
\end{proof}

\section*{Acknowledgments}

We thank Yang Liu and Robbie Schweller for many valuable
discussions during our weekly seminar. Conversations with them
help inspire us to develop this study of testing monomials. We
thank Yang Liu for presenting Koutis' paper \cite{koutis08} at the
seminar. The $O((ms)^2 3^k)$ upper bound given in Theorem
\ref{2k-fixed-thm} has been improved by Yang Liu to $O((ms)^2
2^k)$.

Bin Fu's research is support by an NSF CAREER Award, 2009 April 1 to 2014 March 31.

\end{document}